\newcommand{\ep}{\varepsilon}
\newcommand{\vph}{\varphi}
\newcommand{\intl}{\int\limits}
\newtheorem{thm}{Theorem}
\newtheorem{cor}[thm]{Corollary}
\newtheorem{lem}[thm]{Lemma}
\newtheorem{prop}[thm]{Proposition}
\newtheorem{defn}[thm]{Definition}
\numberwithin{equation}{section}
\newcommand{\keywords}{\textbf{Key words and phrases: }\medskip}
\newcommand{\subjclass}{\textbf{Math. Subj. Clas.: }\medskip}
\begin{document}
\title{\textbf{ Green function for a two-dimensional discrete Laplace-Beltrami
operator}}
\author{\textbf{Volodymyr Sushch}}
\date{Koszalin University of Technology \\
  Sniadeckich 2, 75-453 Koszalin, Poland \\
  volodymyr.sushch@tu.koszalin.pl}  \maketitle
% ----------------------------------------------------------------
%\Input{Talk-Slides.tex}
% ----------------------------------------------------------------
\keywords{discrete Laplacian,  difference equations,  Green function}

\subjclass  {39A12, 39A70}
\begin{abstract}
 We study a discrete model of the Laplacian  in $\mathbb{R}^2$
 that preserves the geometric structure of
the original continual object. This means that,
speaking of a discrete model, we do not mean just  the direct
replacement of differential operators by  difference ones  but
also a discrete analog of the Riemannian structure. We consider this structure on
the appropriate combinatorial analog of differential forms. Self-adjointness and boundness for a discrete Laplacian
 are proved. We define the Green function for this operator  and also derive  an explicit formula of the one.
\end{abstract}
\section{Introduction}
We begin with a brief review of some well known definitions   that are related to the contents of this paper.
Denote by $\Lambda^r(\mathbb{R}^2)$ the set of all differentiable complex-valued
$r$-forms on $\mathbb{R}^2$, where  $r=0,1,2$. Let $\ast: \Lambda^r(\mathbb{R}^2)\rightarrow\Lambda^{2-r}(\mathbb{R}^2)$
 be the Hodge star operator.  An inner product for $r$-forms  with compact support is defined by
\begin{equation} \label{1.1}(\vph, \
\psi)=\intl_{\mathbb{R}^2}\vph\wedge\ast\overline{\psi},
\end{equation}
where the bar over $\psi$ denotes complex conjugation.  Let $L^2\Lambda^r(\mathbb{R}^2)$ denote the completion of
$\Lambda^r(\mathbb{R}^2)$ with respect to the norm generated by the inner product (\ref{1.1}). Let the exterior derivative
$d:\Lambda^r(\mathbb{R}^2)\rightarrow
\Lambda^{r+1}(\mathbb{R}^2)$ be defined as usual.
We define the operator
 \begin{equation}\label{1.2}
 d:L^2\Lambda^r(\mathbb{R}^2)\rightarrow
L^2\Lambda^{r+1}(\mathbb{R}^2)
 \end{equation}
as  the closure in  the $L^2$-norm the corresponding operation specified on smooth forms.
The adjoint of $d$, denoted by $\delta$, is given by
 \begin{equation*}(d\vph, \ \omega)=(\vph, \
\delta\omega), \qquad \vph\in L^2\Lambda^r(\mathbb{R}^2), \quad \omega\in
L^2\Lambda^{r+1}(\mathbb{R}^2).
\end{equation*}
Note  that $\delta:L^2\Lambda^{r+1}(\mathbb{R}^2)\rightarrow L^2\Lambda^r(\mathbb{R}^2)$. The following relations hold among
$\ast$, $d$ and $\delta$. See for instance \cite{W}.
\begin{equation}\label{1.3}
 \ast^2=(-1)^{r(2-r)}Id, \qquad \delta=(-1)^r\ast^{-1}d\ast.
\end{equation}
 The Laplacian is defined to be
 \begin{equation}\label{1.4}
 -\Delta=d\delta+\delta d:L^2\Lambda^r(\mathbb{R}^2)\rightarrow
L^2\Lambda^{r}(\mathbb{R}^2).
\end{equation}

 In this paper we develop some combinatorial structures that are analogs of objects in differential geometry.
We are interested in finding of a natural discrete analog of the Laplacian on cochains.
Speaking of a discrete model, we  mean not only  the direct
replacement of differential operators by  difference ones  but
also  discrete analogs of all essential ingredients of the Riemannian structure over a
properly introduced  combinatorial object.

Our approach bases on the formalism proposed by Dezin \cite{Dezin}. We adapt the
combinatorial constructions from \cite{Dezin, S1, S2} and define discrete analogs of operators (\ref{1.2})--(\ref{1.4})
in a similar way. In \cite{Dezin}, Dezin study  discrete Laplace operators in finite-dimensional Hilbert spaces, i.e. on cochains given in domains  with boundary.
In this paper we extend these results on an infinite complex of complex-valued cochains. We prove self-adjointness and
boundness  for the discrete
Laplace-Beltrami operator in infinite Hilbert spaces that are associated  with $L^2\Lambda^r(\mathbb{R}^2)$. Spectral properties are discussed.
We define the Green function for the discrete Laplacian and derive the one in an explicit form.

There are  other  geometric approaches to discretisation of the Hodge theory of harmonic forms
presented in \cite{D, DP, Kom}.
In all these papers discrete models are given on the simplicial cochains of triangulated  closed Riemannian manifolds.
See also \cite{Bob, Chung, M} and references given there.

Classical references on second order difference equations are the books by  Berezanski \cite[ch. 7]{B}, Atkinson \cite{A}
and the most recent monograph by Teschl \cite{T}.

\section{Combinatorial structures}

Let us denote by $\mathfrak{C}(2)$ the two-dimensional complex. This
complex is defined by
$$\mathfrak{C}(2)=\mathfrak{C}^0\oplus\mathfrak{C}^1\oplus\mathfrak{C}^2,$$
where $\mathfrak{C}^r$ is a real linear space of $r$-dimensional
chains. We follow the  notation of \cite{Dezin, S2}. Let $\{x_{k,s}\}$, \ $\{e_{k,s}^1, \ e_{k,s}^2\},$ \
$\{\Omega_{k,s}\},$ \ $k, s\in\mathbb{Z},$  be the sets of basis
elements of $\mathfrak{C}^0, \ \mathfrak{C}^1, \ \mathfrak{C}^2$
respectively. It is convenient to introduce shift operators $$\tau
k=k+1, \qquad \sigma k=k-1$$ in the set of indices. The
boundary operator $\partial$ is defined by the rule
\begin{align}\label{2.1}\notag
\partial x_{k,s}=0, \quad \partial \Omega_{k,s}=e_{k,s}^1+e_{\tau k,s}^2-e_{k,\tau
s}^1-e_{k,s}^2, \\ \partial e_{k,s}^1=x_{\tau k,s}-x_{k,s}, \quad
\partial e_{k,s}^2=x_{k,\tau s}-x_{k,s}.
\end{align}
The definition of $\partial$ is linearly extended to arbitrary
chains. We call the complex $\mathfrak{C}(2)$ a combinatorial model
of $\mathbb{R}^2$.

On the other hand, we can consider $\mathfrak{C}(2)$ as the tensor
product $\mathfrak{C}(2)=\mathfrak{C}\otimes \mathfrak{C}$ of the
one-dimensional complex $\mathfrak{C}$ (combinatorial model of a
real line). Then basis elements of $\mathfrak{C}(2)$ can be
written as follows
\begin{align*}
x_k\otimes x_s=x_{k,s}, &\qquad e_k\otimes x_s=e_{k,s}^1,\\
e_k\otimes e_s=\Omega_{k,s}, &\qquad x_k\otimes
e_s=e_{k,s}^2,\end{align*} where $x_k$, $e_k$ are the basis elements
of $\mathfrak{C}$.

 Let us introduce an object dual to $\mathfrak{C}(2)$. Name\-ly,
the complex of complex-valued functions over $\mathfrak{C}(2)$. The
dual complex $K(2)$ we can consider as the set of complex-valued
cochains and it has the same structure as $\mathfrak{C}(2)$, i.e.
$K(2)=K^0\oplus K^1\oplus K^2$. In other words, $K(2)$ is a linear
complex space with basis elements
$$\{x^{k,s}, \ e^{k,s}_1, \ e^{k,s}_2, \ \Omega^{k,s}\}.$$

The  pairing (chain-cochain) operation is defined by the rule:
\begin{equation}\label{2.2}
<x_{k,s}, \ x^{p,q}>=<\Omega_{k,s}, \ \Omega^{p,q}>=<e_{k,s}^1, \
e^{p,q}_1>=<e_{k,s}^2, \ e^{p,q}_2>=\delta_{k,s}^{p,q},
\end{equation}
where $\delta_{k,s}^{p,q}$ is  Kronecker symbol. We call elements of
the complex $K(2)$ forms, emphasizing their closeness to the
corresponding continual objects, differential forms.
 Then the 0-, 1-, 2-forms $\vph, \
\omega=(u, v), \ \eta$ can be written as
\begin{equation}\label{2.3} \vph=\sum_{k,s}\vph_{k,s}x^{k,s},
\qquad \eta=\sum_{k,s}\eta_{k,s}\Omega^{k,s}, \qquad
\omega=\sum_{k,s}(u_{k,s}e^{k,s}_1+v_{k,s}e^{k,s}_2),
\end{equation}
 where $\vph_{k,s}, \ u_{k,s}, \
v_{k,s},  \ \eta_{k,s}\in\mathbb{C}$ for any $k,s\in\mathbb{Z}$.
Operation (\ref{2.2}) is extended to arbitrary forms (\ref{2.3}) by
linearity. The boundary operator (\ref{2.1}) in $\mathfrak{C}(2)$
induces the dual operation $d^c$ in $K(2)$:
\begin{equation}\label{2.4}
<\partial a, \ \alpha>=<a, \ d^c\alpha>,
\end{equation}
where $a\in \mathfrak{C}(2), \ \alpha\in K(2)$. We assume that the
 coboundary operator $d^c: K^r \rightarrow K^{r+1}$ is a
discrete analog of the exterior differentiation operator $d$ (\ref{1.3}).

If $\vph\in K^0$ and $\omega=(u, v)\in K^1$, then we have the
following difference representations for $d^c$:
\begin{align}\label{2.5}
 &<e^1_{k,s}, \ d^c\vph>=\vph_{\tau k,
 s}-\vph_{k,s}\equiv\Delta_k\vph_{k,s},\notag \\
 & <e^2_{k,s}, \ d^c\vph>=\vph_{k, \tau
 s}-\vph_{k,s}\equiv\Delta_s\vph_{k,s}\notag \\
  &<\Omega_{k,s}, \ d^c\omega>=v_{\tau k,
 s}-v_{k,s}-u_{k,\tau s}+u_{k,s}\equiv\Delta_k v_{k,s}-\Delta_s
 u_{k,s}.
 \end{align}
 Note that if $\eta\in K^2$, then $d^c\eta=0$.

 Let us now introduce in $K(2)$ a multiplication which is an analog of the exterior
 multiplication $\wedge$ for differential forms. We denote this operation by
 $\cup$ and define it according to the rule:
\begin{align}\label{2.6}
&x^{k,s}\cup x^{k,s}=x^{k,s}, \qquad e^{k,s}_2\cup e^{k,\tau
s}_1=-\Omega^{k,s},\notag\\  &x^{k,s}\cup e^{k,s}_1=e^{k,s}_1\cup
x^{\tau k,s}=e^{k,s}_1,\notag \\ &x^{k,s}\cup
e^{k,s}_2=e^{k,s}_2\cup x^{k,\tau s}=e^{k,s}_2,\notag
\\ &x^{k,s}\cup \Omega^{k,s}= \Omega^{k,s}\cup x^{\tau k,\tau s}=e^{k,s}_1\cup
e^{\tau k,s}_2=\Omega^{k,s},
 \end{align}
 supposing the product to be zero in all other cases.
 The $\cup$-multiplication is extended to discrete
 forms by linearity.
 In terms of the theory of homologies, this is the so-called Whitney multiplication.
 For arbitrary forms $\alpha, \beta\in K(2)$ we have
  the following relation
 \begin{equation}\label{2.7}
d^c(\alpha\cup\beta)=d^c\alpha\cup\beta+(-1)^r\alpha\cup d^c\beta,
\end{equation}
where $r$ is the degree of  $\alpha$. The proof of this
can be found in Dezin \cite[p.~147]{Dezin}. Relation (\ref{2.7}) is
an analog of the corresponding continual relation for differential
forms (see \cite{W}).

 Define a discrete analog of the  Hodge star operator.
 Let $\ep^{k,s}$ denote an arbitrary basis element of $K(2)$. We
introduce the   operation $\ast: K^r\rightarrow K^{2-r}$ by setting
\begin{equation}\label{2.8}
\ep^{k,s}\cup\ast\ep^{k,s}=\Omega^{k,s}.
\end{equation}
Using (\ref{2.6}) we get
\begin{equation*}
\ast x^{k,s}=\Omega^{k,s},  \quad \ast e^{k,s}_1=e^{\tau k,s}_2,
\quad \ast e^{k,s}_2=-e^{k,\tau s}_1,  \quad \ast
\Omega^{k,s}=x^{\tau k,\tau s}.
\end{equation*}
The operation $\ast$ is extended to arbitrary forms by linearity.

Let $\alpha\in K^r$ is an arbitrary  $r$-form:
\begin{equation}\label{2.9}
\alpha=\sum_{k,s}\alpha_{k,s}\ep^{k,s}.
\end{equation}
Denote by $K^r_0$ the set of all discrete
 $r$-form with compact support on $\mathfrak{C}(2)$.
 Let $\Omega$  be the following "domain"
 \begin{equation}\label{2.10}
\Omega=\sum_{k,s}\Omega_{k,s},  \qquad k,s\in\mathbb{Z},
\end{equation}
where $\Omega_{k,s}$ is a two-dimensional basis element of
 $\mathfrak{C}(2)$.
 Note that if the sum  (\ref{2.10})
 is finite and let \ $-N\leq k,s\leq N,$  \  $N\in\mathbb{N}$, then we will  write  $\Omega=\Omega_N$.

 The relation
\begin{equation}\label{2.11}
(\alpha, \ \beta)=<\Omega, \ \alpha\cup\ast\overline{\beta}>,
\end{equation}
where $\alpha, \beta\in K^r_0$, gives a correct definition of
inner product in $K(2)$. Using (\ref{2.2}), (\ref{2.6}) and
(\ref{2.8}), this definition can be rewritten as follows
\begin{equation}\label{2.12}
(\alpha, \ \beta)=\sum_{k,s}\alpha_{k,s}\overline{\beta_{k,s}}.
\end{equation}
For  $\Omega=\Omega_N$ we will write
\begin{equation*}
(\alpha, \ \beta)_N=<\Omega_N, \
\alpha\cup\ast\overline{\beta}>=\sum_{k,s=-N}^N\alpha_{k,s}\overline{\beta_{k,s}}.
\end{equation*}
Let $\alpha\in K^r, \ \beta\in K^{r+1}$. The relation
\begin{equation}\label{2.13}
(d^c\alpha, \ \beta)_N=<\partial\Omega_N, \
\alpha\cup\ast\overline{\beta}>+(\alpha, \ \delta^c\beta)_N
\end{equation}
defines the operator $\delta^c: K^{r+1}\rightarrow K^r,$ \
\begin{equation*}\delta^c\beta=(-1)^r\ast^{-1}d^c\ast\beta,
\end{equation*}
which is the formally adjoint operator to $d^c$ (see \cite {Dezin} for more details). It
is obvious that the operator $\delta^c$ can be regarded as a
discrete analog of the codifferential $\delta$ (cf.  (\ref{1.3})).  Equation (\ref{2.13})
is an analog of the Green formula for the formally adjoint differential
operators $d$ and $\delta$.
 It is easy to check that for $\alpha\in
K^r_0, \quad \beta\in K^{r+1}_0$ we obtain
\begin{equation}\label{2.14}
(d^c\alpha, \ \beta)=(\alpha, \ \delta^c\beta).
\end{equation}
According to (\ref{2.5}),  we have  $\delta^c\varphi=0$ and
\begin{align}\label{2.15, 2.16}
&<x_{k,s}, \ \delta^c\omega>=-\Delta_k u_{\sigma k,s}-\Delta_s v_{k,\sigma s},\\
&<e^1_{k,s}, \ \delta^c\eta>=\Delta_s\eta_{k,\sigma s},\qquad
<e^2_{k,s}, \ \delta^c\eta>=-\Delta_k\eta_{\sigma k, s},
\end{align}
where $\varphi\in K^0, \ \omega\in K^1$ and $\eta\in K^2$.

Therefore a discrete analog of the Laplace-Beltrami  operator (\ref{1.4}) can be defined
as follows
\begin{equation}\label{2.17}
-\Delta^c=\delta^c d^c+d^c\delta^c: K^r\rightarrow K^r.
\end{equation}
Obviously, if $\vph\in K^0$, then we have
\begin{equation}\label{2.18}
-\Delta^c\vph=\delta^c d^c\vph.
\end{equation}
Combining (\ref{2.15, 2.16}) with (\ref{2.5}) we can rewrite (\ref{2.18}) as
\begin{equation}\label{2.19}
<x_{k,s}, \ -\Delta^c\vph>=4\vph_{k,s}-\vph_{\tau
k,s}-\vph_{k,\tau s}-\vph_{\sigma k,s}-\vph_{k,\sigma s}.
\end{equation}
The same difference form of (\ref{2.17}) can be drawn for the components $\eta_{k,s}$ of $\eta\in K^2$ and for the two
components $u_{k,s}  \ v_{k,s}$ of  $\omega\in K^1$.

\section{Discrete Laplacian}

Let us now introduce the linear space
\begin{equation}\label{3.1}
\mathcal{H}^r=\{\alpha\in K^r: \
\sum_{k,s}|\alpha_{k,s}|^2<+\infty, \quad k,s\in\mathbb{Z}\},
 \end{equation}
 where  $r=0,1,2$.
 Clearly,   $\mathcal{H}^r$ is a Hilbert space with  inner
 product (\ref{2.11}) (or (\ref{2.12})) and with the following norm
 \begin{equation}\label{3.2}
\|\alpha\|=\sqrt{(\alpha, \
\alpha)}=\Big(\sum_{k,s}|\alpha_{k,s}|^2\Big)^{\frac{1}{2}}.
\end{equation}
Note that if  $\alpha\in\mathcal{H}^r$, then the set of complex-valued sequences $(\alpha_{k,s})$ is
  $\ell^2(\mathbb{Z}^2)$.
From now on we regard $d^c, \ \delta^c$ and  $-\Delta^c$ as the following operators
\begin{equation*}
 d^c: \mathcal{H}^r\rightarrow \mathcal{H}^{r+1}, \qquad \delta^c: \mathcal{H}^r\rightarrow \mathcal{H}^{r-1},
 \qquad -\Delta^c: \mathcal{H}^r\rightarrow \mathcal{H}^r,
\end{equation*}
where $r=0,1,2$. It is convenient to suppose that  $\mathcal{H}^{-1}=\mathcal{H}^3=0$.

\begin{thm}
The  operators
\begin{equation}\label{3.3}
 -\Delta^c: \mathcal{H}^r\rightarrow \mathcal{H}^r, \qquad r=0,1,2,
 \end{equation}
  are bounded and self-adjoint. Moreover, $\|-\Delta^c\|=8,$
where $\|-\Delta^c\|$ denotes the operator norm of $-\Delta^c$.
\end{thm}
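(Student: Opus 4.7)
My plan has three stages: boundedness, self-adjointness, and sharpness of the norm.

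First, I would establish that $d^c$ (and hence $\delta^c$) are bounded operators between the relevant $\mathcal{H}^r$ spaces. Using the difference representations in (\ref{2.5}) and (\ref{2.15, 2.16}) together with translation invariance of the $\ell^2(\mathbb{Z}^2)$ norm (the shifts $\tau,\sigma$ act isometrically on $\mathcal{H}^r$), the triangle inequality gives an explicit bound: for instance, for $\varphi\in\mathcal{H}^0$ one has $\|d^c\varphi\|^2=\sum_{k,s}\bigl(|\varphi_{\tau k,s}-\varphi_{k,s}|^2+|\varphi_{k,\tau s}-\varphi_{k,s}|^2\bigr)\le 8\|\varphi\|^2$, and the analogous computation applies to the formulas for $d^c$ on $1$-forms and for $\delta^c$. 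Consequently $-\Delta^c=\delta^c d^c+d^c\delta^c$ is a bounded operator on each $\mathcal{H}^r$.

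Next, I would promote the adjointness identity (\ref{2.14}) from $K^r_0$ to all of $\mathcal{H}^r$. Since $K^r_0$ is dense in $\mathcal{H}^r$ and both $d^c$ and $\delta^c$ are bounded, (\ref{2.14}) extends by continuity to yield $(d^c)^*=\delta^c$ on the whole space. Then $\delta^c d^c=(d^c)^*d^c$ and $d^c\delta^c=d^c(d^c)^*$ are each manifestly self-adjoint, and therefore so is their sum $-\Delta^c$.

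For the norm I would split the argument into an upper bound and a lower bound. The upper bound follows from (\ref{2.19}) (and its analogues for $1$- and $2$-forms that the author mentions): each of the four shifted terms is an isometry of $\ell^2(\mathbb{Z}^2)$, so the triangle inequality gives $\|-\Delta^c\alpha\|\le(4+4)\|\alpha\|=8\|\alpha\|$. For the matching lower bound, the cleanest approach is Fourier: the unitary $\mathcal{F}:\mathcal{H}^0\to L^2([0,2\pi]^2)$, $\mathcal{F}\varphi(\theta_1,\theta_2)=\sum_{k,s}\varphi_{k,s}e^{-i(k\theta_1+s\theta_2)}$, conjugates $-\Delta^c$ into multiplication by the symbol $m(\theta_1,\theta_2)=4-2\cos\theta_1-2\cos\theta_2$, whose essential supremum is $8$ (attained at $(\pi,\pi)$), giving $\|-\Delta^c\|=\|m\|_\infty=8$. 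If one wishes to avoid Fourier and stay inside the combinatorial framework, the same lower bound can be produced by an explicit sequence of test forms, e.g.\ $\varphi^{(N)}_{k,s}=(2N+1)^{-1}(-1)^{k+s}$ for $|k|,|s|\le N$ and zero otherwise: on interior indices one computes $-\Delta^c\varphi^{(N)}_{k,s}=8\varphi^{(N)}_{k,s}$, and the boundary indices contribute only $O(N)$ terms of bounded size, so that $\|-\Delta^c\varphi^{(N)}\|/\|\varphi^{(N)}\|\to 8$.

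The main obstacle is the sharpness of the norm: the upper bound and self-adjointness are essentially bookkeeping on top of translation invariance of $\ell^2$, but matching the constant $8$ requires either the Fourier diagonalization or the boundary estimate on the test sequence, and one should verify that the two-index boundary contributions really are lower-order to avoid an off-by-a-constant error.
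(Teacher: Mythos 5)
Your proof is correct, and in two places it diverges from (and improves on) the paper's own argument. For self-adjointness the paper works with the Green formula (\ref{2.13}) on the finite domains $\Omega_N$, writes out the boundary terms explicitly as in (\ref{3.5}), and lets $N\to\infty$ (the boundary sums vanish because the row sums $\sum_k|\varphi_{k,N}|^2$ of an $\ell^2(\mathbb{Z}^2)$ sequence tend to zero); you instead prove boundedness of $d^c$ and $\delta^c$ first and extend the compact-support identity (\ref{2.14}) by density, obtaining $(d^c)^*=\delta^c$ and hence $-\Delta^c=(d^c)^*d^c+d^c(d^c)^*$ manifestly self-adjoint. Both routes work; yours is the cleaner functional-analytic packaging, the paper's stays closer to its combinatorial Green formula. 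The more substantive difference is the norm. The paper only establishes the upper bound, via $|(-\Delta^c\alpha,\alpha)|\le 8\|\alpha\|^2$, and then asserts that self-adjointness ``easily'' upgrades $\|-\Delta^c\|\le 8$ to equality with a reference to Mizohata; but self-adjointness only gives $\|-\Delta^c\|=\sup_{\|\alpha\|=1}|(-\Delta^c\alpha,\alpha)|$, and a matching lower bound on that supremum is nowhere exhibited in the paper. Your Fourier diagonalization (symbol $4-2\cos\theta_1-2\cos\theta_2$ with essential supremum $8$ at $(\pi,\pi)$), or equivalently the oscillating test sequence $\varphi^{(N)}_{k,s}=(2N+1)^{-1}(-1)^{k+s}$ with $O(N^{-1})$ boundary error in the squared norm, supplies exactly the missing ingredient, so your write-up is actually the more complete of the two. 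One small point worth stating explicitly: the norm computation should be carried out for each $r$, but since $-\Delta^c$ acts on $1$- and $2$-forms componentwise by the same five-point formula (\ref{2.19}), the $r=0$ computation transfers verbatim.
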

\begin{proof}
We begin by proving self-adjointness of $-\Delta^c$ for the case $r=0$. Let $\vph, \psi\in\mathcal{H}^0$ and
$\omega=(u,v)\in\mathcal{H}^1$. Then the Green formula (\ref{2.13}) can be rewritten as
\begin{align}\label{3.4}\notag
(d^c\vph, \ \omega)_N&=\sum_{k=-N}^N(\vph_{k,\tau N}\overline{v_{k,N}}-\vph_{k,-N}\overline{v_{k,-\tau N}})+\\
&+\sum_{s=-N}^N(\vph_{\tau N,s}\overline{u_{N,s}}-\vph_{-N,s}\overline{u_{-\tau N,s}})+(\vph, \ \delta^c\omega)_N
\end{align}
The substitution of $d^c\psi$ for $\omega$  in (\ref{3.4}) gives
\begin{align}\label{3.5}\notag
(d^c\vph, \ d^c\psi)_N=\sum_{k=-N}^N\big(\vph_{k,\tau N}(\overline{\psi_{k,\tau N}}-\overline{\psi_{k,N}})-
\vph_{k,-N}(\overline{\psi_{k,-N}}-\overline{\psi_{k,-\tau N}})\big)+\\
+\sum_{s=-N}^N\big(\vph_{\tau N,s}(\overline{\psi_{\tau N,s}}-\overline{\psi_{N,s}})
-\vph_{-N,s}(\overline{\psi_{-N,s}}-\overline{\psi_{-\tau N,s}})\big)+(\vph, \ \delta^c d^c\psi)_N.
\end{align}
Letting $N\rightarrow+\infty$ we get
\begin{equation}\label{3.6}
(d^c\vph, \ d^c\psi)=(\vph, \ \delta^c d^c\psi).
\end{equation}
It follows immediately that $-\Delta^c: \mathcal{H}^0\rightarrow \mathcal{H}^0$ is self-adjoint.

The same proof remains valid for the case $r=2$. Now we have $-\Delta^c=d^c\delta^c$
and the analog of relation (\ref{3.5}) results from the inner product $(\delta^c\eta, \ \delta^c\zeta)_N$,
where $\eta, \zeta\in\mathcal{H}^2$. A trivial verification shows that properties of $(\delta^c\eta, \ \delta^c\zeta)_N$
are completely similar to those of $(d^c\vph, \ d^c\psi)_N$. Hence
\begin{equation}\label{3.7}
(\delta^c\eta, \ \delta^c\zeta)=(\eta, \ d^c\delta^c\zeta).
\end{equation}

Finally,  let $r=1$. In this case we have $-\Delta^c=d^c\delta^c+\delta^c d^c$  and we must study the sum
$(d^c\omega, \ d^c\vartheta)+(\delta^c\omega, \ \delta^c\vartheta)$, where $\omega=(u,v)\in\mathcal{H}^1$, \
$\vartheta=(f,g)\in\mathcal{H}^1$. Taking in (\ref{2.13}) $\alpha=\omega$ and $\beta=d^c\vartheta$ we obtain the analog of
relation (\ref{3.5}) for 1-forms
\begin{align*}
&(d^c\omega, \ d^c\vartheta)_N\equiv\sum_{k,s=-N}^N(\Delta_k v_{k,s}-\Delta_s u_{k,s})(\overline{\Delta_k g_{k,s}-\Delta_s f_{k,s}})=\\
&=\sum_{k=-N}^N\big[u_{k,-N}(\overline{\Delta_kg_{k,-\tau N}}-\overline{\Delta_Nf_{k,-\tau N}})-
u_{k,\tau N}(\overline{\Delta_kg_{k,N}}-\overline{\Delta_Nf_{k,N}})\big]+\\
&+\sum_{s=-N}^N\big[v_{\tau N,s}(\overline{\Delta_Ng_{N,s}}-\overline{\Delta_sf_{N,s}})-
v_{-N,s}(\overline{\Delta_Ng_{-\tau N,s}}-\overline{\Delta_sf_{-\tau N,s}})\big]+\\
&+(\omega, \ \delta^c d^c\vartheta)_N.
\end{align*}
Letting $N\rightarrow+\infty$ we obtain equation (\ref{3.6}) for the 1-forms $\omega, \ \vartheta\in\mathcal{H}^1$.
In the same manner we can see that equation (\ref{3.7}) holds for $\omega, \ \vartheta\in\mathcal{H}^1$.
Adding we obtain
\begin{equation}\label{3.8}
(d^c\omega, \ d^c\vartheta)+(\delta^c\omega, \ \delta^c\vartheta)=(\omega, \ -\Delta^c\vartheta).
\end{equation}
Thus it follows that
\begin{equation*}
(-\Delta^c\omega, \ \vartheta)=(\omega, \ -\Delta^c\vartheta).
\end{equation*}

For the rest of the proof let $\alpha\in\mathcal{H}^r$ be an arbitrary $r$-form. Substituting  (\ref{2.19}) into (\ref{2.12}) we get
\begin{align*}
|(-\Delta^c\alpha, \ \alpha)|&=\Big|\sum_{k,s}(4\alpha_{k,s}-\alpha_{\tau
k,s}-\alpha_{k,\tau s}-\alpha_{\sigma k,s}-\alpha_{k,\sigma s})\overline{\alpha_{k,s}}\Big|\leq\\
&\leq 4\sum_{k,s}|\alpha_{k,s}|^2+\sum_{k,s}|\alpha_{\tau
k,s}\overline{\alpha_{k,s}}|+\sum_{k,s}|\alpha_{k,\tau s}\overline{\alpha_{k,s}}|+\\
&+\sum_{k,s}|\alpha_{\sigma k,s}\overline{\alpha_{k,s}}|+\sum_{k,s}|\alpha_{k,\sigma s}\overline{\alpha_{k,s}}|
\leq 8\|\alpha\|^2.
\end{align*}
From this we conclude that $\|-\Delta^c\|\leq8$. Since $-\Delta^c$ is self-adjoint,
it follows easily that $\|-\Delta^c\|=8$ (see for instance \cite[Ch. 3]{Mi}).

\end{proof}

\begin{cor} The  operators  (\ref{3.3}) are positive, i.e. for any non-trivial
$r$-form $\alpha\in\mathcal{H}^r$ we have
 \begin{equation*}
 (-\Delta^c\alpha, \ \alpha)>0.
 \end{equation*}
 \end{cor}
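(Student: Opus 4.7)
The proof will follow almost immediately from the Green-type identities already established in the proof of Theorem 1, so the plan splits into a short non-negativity step and a slightly longer injectivity-on-the-null-space step.

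First I would observe that equations (3.6), (3.7) and (3.8), combined with (2.14), yield
\begin{equation*}
(-\Delta^c\alpha,\ \alpha)=\|d^c\alpha\|^2+\|\delta^c\alpha\|^2
\end{equation*}
for every $\alpha\in\mathcal{H}^r$, $r=0,1,2$. (When $r=0$ the second term is absent since $\delta^c\varphi=0$; when $r=2$ the first term is absent since $d^c\eta=0$.) This gives $(-\Delta^c\alpha,\alpha)\ge 0$ for free, so the only real content of the corollary is strict positivity, i.e.\ the implication $(-\Delta^c\alpha,\alpha)=0\Longrightarrow \alpha=0$.

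For $r=0$, $(-\Delta^c\varphi,\varphi)=0$ forces $d^c\varphi=0$, which by the difference representation (2.5) means $\Delta_k\varphi_{k,s}=\Delta_s\varphi_{k,s}=0$ for all $k,s$; hence $\varphi_{k,s}$ is constant in $k,s\in\mathbb{Z}$. Since the only constant sequence lying in $\ell^2(\mathbb{Z}^2)$ is identically zero, $\varphi=0$. The case $r=2$ is completely analogous: $\delta^c\eta=0$ combined with (2.15, 2.16) forces $\eta_{k,s}$ to be constant, hence zero in $\mathcal{H}^2$.

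For $r=1$ I would reduce to the scalar case. If $(-\Delta^c\omega,\omega)=0$ then both $d^c\omega=0$ and $\delta^c\omega=0$, so in particular $-\Delta^c\omega=0$. By the remark following (2.19), each of the components $u_{k,s}$ and $v_{k,s}$ of $\omega=(u,v)$ then satisfies the scalar discrete harmonic equation $4u_{k,s}=u_{\tau k,s}+u_{k,\tau s}+u_{\sigma k,s}+u_{k,\sigma s}$, and both components belong to $\ell^2(\mathbb{Z}^2)$. Regarding $u$ and $v$ as $0$-forms $\tilde u,\tilde v\in\mathcal{H}^0$, the $r=0$ case applied to $\tilde u$ and $\tilde v$ yields $\|d^c\tilde u\|^2=(-\Delta^c\tilde u,\tilde u)=0$ and likewise for $\tilde v$, so $\tilde u=\tilde v=0$ and therefore $\omega=0$.

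The main (and really the only) obstacle is the $r=1$ case: non-negativity gives two separate null-space conditions $d^c\omega=0$ and $\delta^c\omega=0$, neither of which individually forces $\omega=0$, and one must combine them. The clean way, as above, is to use that their simultaneous vanishing is equivalent to $-\Delta^c\omega=0$ and then exploit the fact that the component-wise expression (2.19) reduces the vector-valued problem to the already-settled scalar problem in $\mathcal{H}^0$.
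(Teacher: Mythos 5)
Your proposal is correct, and its first step is exactly the paper's entire proof: the paper's argument for this corollary consists of the single line ``This follows from (3.6), (3.7) and (3.8)'', i.e.\ the identity $(-\Delta^c\alpha,\alpha)=\|d^c\alpha\|^2+\|\delta^c\alpha\|^2$, which on its face only yields $(-\Delta^c\alpha,\alpha)\ge 0$. The additional content you supply --- that $d^c\varphi=0$ (resp.\ $\delta^c\eta=0$) forces the components to be constant, hence zero in $\ell^2(\mathbb{Z}^2)$, and that for $r=1$ the simultaneous vanishing of $d^c\omega$ and $\delta^c\omega$ gives $-\Delta^c\omega=0$, which by the componentwise form of $-\Delta^c$ on $1$-forms reduces to the scalar case --- is exactly the step the paper leaves implicit, and it is needed for the \emph{strict} inequality claimed in the statement (note that $0\in\sigma(-\Delta^c)$ by Corollary~3, so strict positivity is precisely the assertion that $0$ is not an eigenvalue). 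Your reduction of the $r=1$ case to the $r=0$ case via the remark following (2.19) is legitimate and closes the one genuine gap in the paper's one-line proof.
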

\begin{proof}
This follows from (\ref{3.6}), (\ref{3.7}) and (\ref{3.8}).
\end{proof}

\begin{cor} For any $r$, $r=0,1,2$, we have
 \begin{equation*}
 \sigma(-\Delta^c)=[0,\ 8],
 \end{equation*}
 where $\sigma(-\Delta^c)$ denotes the spectrum of $-\Delta^c$.
 \end{cor}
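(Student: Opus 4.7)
The plan is to diagonalize $-\Delta^c$ by the discrete Fourier transform on $\mathbb{Z}^2$ and to read off the spectrum as the range of the resulting multiplication symbol, which will turn out to be $[0,8]$.

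First I would fix the Fourier transform as the unitary map $\mathcal{F}\colon\ell^2(\mathbb{Z}^2)\to L^2(\mathbb{T}^2)$, with $\mathbb{T}=[-\pi,\pi]$ equipped with normalized Lebesgue measure, sending $(\alpha_{k,s})$ to $\sum_{k,s}\alpha_{k,s}\,e^{-i(k\theta+s\phi)}$. Under $\mathcal{F}$, the index shift $k\mapsto k+1$ is conjugated to multiplication by $e^{i\theta}$, and similarly in $s$. Via the coefficient expansions in (\ref{2.3}), each of $\mathcal{H}^0$ and $\mathcal{H}^2$ is identified with one copy of $\ell^2(\mathbb{Z}^2)$, and $\mathcal{H}^1$ is identified with $\ell^2(\mathbb{Z}^2)\oplus\ell^2(\mathbb{Z}^2)$ through $\omega=(u,v)$.

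Second, substituting the shift-to-multiplication dictionary into (\ref{2.19}), the operator $-\Delta^c$ on $\mathcal{H}^0$ becomes multiplication by the symbol
\begin{equation*}
m(\theta,\phi)=4-e^{i\theta}-e^{-i\theta}-e^{i\phi}-e^{-i\phi}=2(1-\cos\theta)+2(1-\cos\phi).
\end{equation*}
The remark immediately after (\ref{2.19}) asserts that the same five-point difference formula governs the scalar component of any $\eta\in K^2$ and each of the two components of $\omega=(u,v)\in K^1$. In the 1-form case this requires expanding $\delta^c d^c\omega+d^c\delta^c\omega$ by means of (\ref{2.5}) and (\ref{2.15, 2.16}) and noting that the mixed terms $\Delta_k\Delta_s v$ and $\Delta_s\Delta_k v$ coming from $\delta^c d^c\omega$ and $d^c\delta^c\omega$ cancel (and similarly for $u$), leaving the scalar Laplacian acting independently on $u$ and $v$. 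Hence after Fourier transform, $-\Delta^c$ is unitarily equivalent to multiplication by $m$ on one copy of $L^2(\mathbb{T}^2)$ when $r=0,2$ and on two copies when $r=1$.

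Third, I would invoke the standard fact that for a bounded real continuous function $m$ on a compact space the spectrum of the multiplication operator $M_m$ on $L^2$ equals the range of $m$, and that direct sums of unitarily equivalent operators have the same spectrum. Since $m$ is continuous on the compact torus $\mathbb{T}^2$ with minimum $m(0,0)=0$ and maximum $m(\pm\pi,\pm\pi)=8$, its range is exactly $[0,8]$. This yields $\sigma(-\Delta^c)=[0,8]$ in all three cases. The main point that is not entirely routine is the component decoupling for 1-forms; once that identity is verified, the Fourier computation and the spectral identification are standard. (One could equivalently avoid the Fourier language and prove $[0,8]\subseteq\sigma(-\Delta^c)$ by constructing explicit Weyl sequences $\alpha^{(n)}_{k,s}=\chi_n(k,s)\,e^{i(k\theta_0+s\phi_0)}$ supported on growing boxes, where $\theta_0,\phi_0$ are chosen so that $m(\theta_0,\phi_0)=\lambda$; but the multiplier viewpoint is cleaner.)
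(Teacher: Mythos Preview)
Your argument is correct. The Fourier diagonalization turns $-\Delta^c$ into multiplication by $m(\theta,\phi)=2(1-\cos\theta)+2(1-\cos\phi)$ on $L^2(\mathbb{T}^2)$ (componentwise for $r=1,2$), and your verification that the cross terms cancel on $1$-forms so that the scalar formula~(\ref{2.19}) acts separately on $u$ and $v$ is accurate. The spectrum is then the range $[0,8]$ of $m$, as you say.

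As for comparison: the paper's own proof is the single word ``Straightforward,'' so there is essentially nothing to compare against. Note, however, that the results preceding the corollary (self-adjointness, positivity, and $\|-\Delta^c\|=8$ from Theorem~1 and Corollary~2) only yield the containment $\sigma(-\Delta^c)\subseteq[0,8]$; the reverse inclusion genuinely requires something like your multiplier argument or the Weyl-sequence construction you mention in passing. So your write-up in fact supplies the nontrivial half that the paper leaves to the reader.
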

 \begin{proof}
Straightforward.
\end{proof}

\section{Discrete analog of the Green function}

Let $\varrho(-\Delta^c)=\mathbb{C}\setminus\sigma(-\Delta^c)$ denotes the resolvent set of  $-\Delta^c$.
In this section we try to describe the resolvent  operator $(-\Delta^c-\lambda)^{-1}$, \
$\lambda\in\varrho(-\Delta^c)$, \ of the operator $-\Delta^c: \mathcal{H}^0\rightarrow \mathcal{H}^0$.
Let us introduce a discrete form $G(x,\tilde{x})$ on $\mathfrak{C}^0\times\mathfrak{C}^0$  as follows
\begin{equation*}
 G(x,\tilde{x})=\sum_{k,s}G_{k,s}(\tilde{x})x^{k,s}, \quad  \mbox{where} \quad
  G_{k,s}(\tilde{x})=\sum_{m,n}G_{k,s,m,n}x^{m,n}
 \end{equation*}
 and $G_{k,s,m,n}\in\mathbb{C}$  for any $k,s,m,n\in\mathbb{Z}$.  Hence we have
\begin{equation}\label{4.1}
 G(x,\tilde{x})=\sum_{k,s}\sum_{m,n}G_{k,s,m,n}x^{k,s}x^{m,n}.
 \end{equation}
 This is a so-called discrete double form (for details see \cite{S1}). Note that in the continual case
(for differential forms) this construction is
 due to  De~Rham \cite{Rham}.  It is obvious that the basis elements $x^{k,s}$ and $x^{m,n}$
 in (\ref{4.1}) commute and $G(x,\tilde{x})=G(\tilde{x},x)$.
 By analogy with (\ref{2.2}), the double 0-form $G(x,\tilde{x})$ can be written pointwise as
\begin{equation}\label{4.2}
 <x_{k,s}x_{m,n}, \ G(x,\tilde{x})>=G_{k,s,m,n}.
 \end{equation}
Let $\varphi\in\mathcal{H}^0$. Then we define a 0-form $\delta^{m,n}$, $m,n\in\mathbb{Z},$ by setting
\begin{equation}\label{4.3}
 (\varphi, \ \delta^{m,n})=\sum_{k,s}\varphi_{k,s}\delta^{m,n}_{k,s}=\varphi_{m,n},
 \end{equation}
 where $\delta^{m,n}_{k,s}$ is the Kronecker delta. By analogy with the continual case, the 0-form $\delta^{m,n}$
 defined by (\ref{4.3}) will be called a discrete analog of  Dirac's $\delta$-function at the point $x_{m,n}$.
 We can write $\delta^{m,n}$ as
 \begin{equation*}\delta^{m,n}=\sum_{k,s}\delta^{m,n}_{k,s}x^{k,s}=x^{m,n}.
 \end{equation*}
 We need also the following double form
\begin{equation}\label{4.4}
 \delta(x,\tilde{x})=\sum_{k,s}\sum_{m,n}\delta^{m,n}_{k,s}x^{k,s}x^{m,n}=
 \sum_{m,n}\delta^{m,n}x^{m,n}=\sum_{m,n}x^{m,n}x^{m,n}.
 \end{equation}
\begin{defn}
The double form (\ref{4.1}) is called the  Green function  for the operator
$-\Delta^c: \mathcal{H}^0\rightarrow \mathcal{H}^0$ if
\begin{equation}\label{4.5}
G_{k,s,m,n}(\lambda)=(\delta^{k,s}, \ (-\Delta^c-\lambda)^{-1}\delta^{m,n})
\end{equation}
for any  $k,s,m,n\in\mathbb{Z}$.
\end{defn}
Of course,
\begin{equation*}
(-\Delta^c-\lambda)^{-1}\delta^{m,n}=G_{m,n}(x,\lambda).
\end{equation*}
It follows easily that
\begin{equation}\label{4.6}
(-\Delta^c-\lambda)_xG(x,\tilde{x},\lambda)=\delta(x,\tilde{x}),
\end{equation}
where $(-\Delta^c-\lambda)_x$ is the operator $-\Delta^c-\lambda$ that acts with respect to $x$.
Recall that in our abbreviation  $x$ corresponds to  $k,s$.

We have
\begin{equation*}
(-\Delta^c-\lambda)^{-1}\varphi=\sum_{k,s}\sum_{m,n}G_{k,s,m,n}(\lambda)\varphi_{m,n}x^{k,s}x^{m,n},
\qquad \varphi\in\mathcal{H}^0, \quad \lambda\in\varrho(-\Delta^c).
\end{equation*}
Indeed, applying $(-\Delta^c-\lambda)_x$ to the right-hand side gives
\begin{align*}
(-\Delta^c-\lambda)_x\sum_{m,n}\varphi_{m,n}G_{m,n}(x,\lambda)x^{m,n}&=
\sum_{m,n}\varphi_{m,n}(-\Delta^c-\lambda)_xG_{m,n}(x,\lambda)x^{m,n}=\\
=\sum_{m,n}\varphi_{m,n}\delta^{m,n}x^{m,n}&=\sum_{m,n}\varphi_{m,n}x^{m,n}x^{m,n}=\varphi.
\end{align*}

We now try to write the Green function for $-\Delta^c$ in a somewhat more explicit way.
 For this we construct a solution of the equation
 \begin{equation}\label{4.7}
-\Delta^c\varphi=\lambda\varphi,  \qquad \lambda\in\mathbb{C}.
\end{equation}
The following construction is adapted from \cite{T}, where the Green function is studied  for Jacobi operators.
By (\ref{2.19}), equation (\ref{4.7}) can be written pointwise (at the point $x_{k,s}$) as
\begin{equation}\label{4.8}
4\vph_{k,s}-\vph_{\tau
k,s}-\vph_{k,\tau s}-\vph_{\sigma k,s}-\vph_{k,\sigma s}=\lambda\vph_{k,s}.
\end{equation}
Applying  the transformation $\lambda=-4\mu+4$ we reduce (\ref{4.8}) to the equation
\begin{equation}\label{4.9}
\frac{1}{4}(\vph_{\tau
k,s}+\vph_{k,\tau s}+\vph_{\sigma k,s}+\vph_{k,\sigma s})=\mu\vph_{k,s}.
\end{equation}
An easy  computation shows that, substituting the  ansatz $\varphi_{k,s}=p^{k+s}$ into  (\ref{4.9}), we obtain
\begin{equation}\label{4.10}
\varphi_{k,s}^{\pm}(\mu)=(\mu\pm R(\mu))^{k+s},
\end{equation}
where  $R(\mu)=-\sqrt{\mu^2-1}$ and $\sqrt{\cdot}$ denotes the standard brunch of the square root.
 It follows that   we can write the solutions $\varphi^{\pm}(\lambda)$ of (\ref{4.7})
in the form
\begin{equation}\label{4.11}
\varphi^{\pm}(\lambda)=\sum_{k,s}\varphi_{k,s}^{\pm}(\mu)x^{k,s},
\end{equation}
where $\varphi_{k,s}^{\pm}(\mu)$ are given by (\ref{4.10}) and
$\mu=1-\frac{\lambda}{4}$. Obviously,  $\lambda\in[0,\ 8]$ if and
only if  $\mu\in[-1, 1]$.  By Corollary~3,
$\lambda\in\varrho(-\Delta^c)$ leads to
$\mu\in\mathbb{C}\setminus[-1, 1]$.

It is convenient to write $\varphi_{k,s}^{\pm}(\mu)$ as
\begin{equation}\label{4.12}
\varphi_{k,s}^{\pm}(\mu)=\varphi_k^{\pm}(\mu)\cdot\varphi_s^{\pm}(\mu),
\end{equation}
where $\varphi_{k}^{\pm}(\mu)=(\mu\pm R(\mu))^{k}$  and $k,
s\in\mathbb{Z}$.
  Suppose $\mu\in\mathbb{C}\setminus[-1, 1]$. Then
one has to examine that  the sequences
$(\varphi_k^{\pm}(\mu))_{k\in\mathbb{Z}}$ are square summable  near
$\pm\infty$ respectively, i.e these are $\ell^2(\mathbb{Z})$ near
$\pm\infty$.

Let $\mathcal{H}^0_{\pm}$ denotes the set of 0-forms whose
restriction to $K_{\pm}^0$ belongs to $\mathcal{H}^0$. Here
$K_{+}^0$  ($K_{-}^0$) denotes the set of 0-cochains (\ref{2.9})
with $k+s>0$  ($k+s<0$).
\begin{prop} Let $\lambda\in\varrho(-\Delta^c)$. Then
$\varphi^{\pm}(\lambda)\in\mathcal{H}^0_{\pm}$.
\end{prop}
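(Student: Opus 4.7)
The plan is to analyze the two roots $p_\pm = \mu \pm R(\mu)$ of the characteristic quadratic $p^2 - 2\mu p + 1 = 0$ that underlies the ansatz producing (4.10), and then combine the factorization (4.12) with a dichotomy in modulus to produce geometric decay of $\varphi^\pm(\lambda)$ on the appropriate side of the complex.

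First, Vieta's formula for this quadratic gives $p_+ \cdot p_- = 1$, so $|p_+| \cdot |p_-| = 1$ and the two moduli are mutually reciprocal. The key analytic input is the dichotomy: for every $\mu \in \mathbb{C} \setminus [-1, 1]$ one has $|p_\pm| \neq 1$. I would argue this by contrapositive: if $|p|=1$ and $p + p^{-1} = 2\mu$, writing $p = e^{i\theta}$ forces $2\mu = 2\cos\theta$, hence $\mu \in [-1,1]$. After fixing the branch of the square root as in the paper, one therefore obtains $|p_+| < 1 < |p_-|$ for every $\mu$ outside $[-1,1]$.

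Second, I would apply the product decomposition (4.12), $\varphi_{k,s}^\pm(\mu) = p_\pm^{k}\,p_\pm^{s}$, to reduce the square-summability question to two independent one-dimensional geometric sums. The restriction of $\varphi^+(\lambda)$ to the indices defining $K_+^0$ (indices that drive the exponent to $+\infty$) yields a double series dominated by a product of two convergent geometric series of common ratio $|p_+|<1$. The estimate for $\varphi^-(\lambda)$ on $K_-^0$ is entirely symmetric: one uses $|p_-|^{-1} < 1$ to control indices tending to $-\infty$.

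Finally, the substitution $\lambda = 4 - 4\mu$ together with Corollary~3, which identifies $\sigma(-\Delta^c)$ with $[0,8]$, translates the hypothesis $\lambda \in \varrho(-\Delta^c)$ into $\mu \in \mathbb{C}\setminus[-1,1]$, so the dichotomy is available exactly under the hypothesis of the proposition. The main obstacle is this very dichotomy $|p_\pm| \neq 1$ off $[-1,1]$; once it is secured, the remainder is routine geometric-series estimation and delivers $\varphi^\pm(\lambda) \in \mathcal{H}^0_\pm$.
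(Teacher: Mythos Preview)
Your argument is correct and is precisely the route the paper takes: the paper's entire proof is the single line ``This follows immediately from (4.12),'' relying on the one-dimensional square-summability of $(\varphi_k^{\pm}(\mu))$ near $\pm\infty$ stated just before the proposition, and you have simply supplied the details (the Vieta/dichotomy argument showing $|p_\pm|\neq 1$ for $\mu\notin[-1,1]$, then the factorization into two geometric series) that the paper leaves implicit.
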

\begin{proof}
This follows immediately from (\ref{4.12}).
\end{proof}
\begin{lem} For any $k,
s\in\mathbb{Z}$  the component
$\varphi_k^{+}(\mu)\cdot\varphi_s^{-}(\mu)$ is a  solution of
(\ref{4.9})
\end{lem}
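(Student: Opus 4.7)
The plan is to reduce the verification to a one-line algebraic identity for the two characteristic roots appearing in (\ref{4.10}). Write $a=\mu+R(\mu)$ and $b=\mu-R(\mu)$, so that $\varphi_k^{+}(\mu)=a^k$ and $\varphi_s^{-}(\mu)=b^s$. The candidate solution is $\varphi_{k,s}=a^k b^s$.

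First I would record the two elementary identities that make everything work, namely
\begin{equation*}
  ab = \mu^2 - R(\mu)^2 = \mu^2-(\mu^2-1) = 1, \qquad a+b = 2\mu.
\end{equation*}
In particular $a^{-1}=b$ and $b^{-1}=a$, so $a$ and $b$ are simply the two roots of the quadratic $z^2-2\mu z+1=0$, which is exactly the indicial equation one obtains from the ansatz $\varphi_{k,s}=p^{k+s}$ in equation (\ref{4.9}); this is why the same $a,b$ arise for both the ``$k$-direction'' and the ``$s$-direction''.

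Next I would substitute $\varphi_{k,s}=a^k b^s$ into the left-hand side of (\ref{4.9}). The four shifts produce
\begin{equation*}
  \vph_{\tau k,s}+\vph_{\sigma k,s}+\vph_{k,\tau s}+\vph_{k,\sigma s}
  = \bigl(a+a^{-1}+b+b^{-1}\bigr)\,a^k b^s.
\end{equation*}
Using $a^{-1}=b$ and $b^{-1}=a$, the parenthesis collapses to $2(a+b)=4\mu$, and dividing by $4$ yields $\mu\,\vph_{k,s}$, which is exactly the right-hand side of (\ref{4.9}).

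There is really no hard step here: once the relations $ab=1$ and $a+b=2\mu$ are in hand, the verification is a two-line calculation. The only minor point to watch is that the choice of branch of $\sqrt{\mu^2-1}$ in the definition of $R(\mu)$ does not affect the identities $ab=1$ and $a+b=2\mu$, so the argument is insensitive to the branch cut and valid for all $\mu\in\mathbb{C}$ (and hence for every $\lambda\in\mathbb{C}$ via $\mu=1-\lambda/4$), in particular for the parameters $k,s\in\mathbb{Z}$ stated in the lemma.
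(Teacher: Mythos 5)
Your proof is correct and is essentially the same as the paper's: both amount to substituting $\varphi_{k,s}=a^k b^s$ with $a=\mu+R(\mu)$, $b=\mu-R(\mu)$ into (\ref{4.9}) and invoking the characteristic quadratic $z^2-2\mu z+1=0$, which the paper uses in the form $\varphi_2^{\pm}-2\mu\varphi_1^{\pm}+1=0$ and you use via the equivalent Vieta relations $ab=1$, $a+b=2\mu$. Your packaging (collapsing the four shifts to $(a+a^{-1}+b+b^{-1})a^k b^s=4\mu\,a^k b^s$) is a slightly cleaner rendering of the same computation.
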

\begin{proof}
It is easy to check that
\begin{equation*}
\varphi_2^{\pm}-2\mu\varphi_1^{\pm}+1=0.
\end{equation*}
It follows that $\varphi_k^{+}(\mu)\cdot\varphi_s^{-}(\mu)$ satisfies equation (\ref{4.9}).
Indeed, putting this in (\ref{4.9}), we have
\begin{equation*}
\varphi_{\sigma k}^{+}(\mu)\varphi_{\sigma s}^{-}(\mu)\big[(\varphi_2^{+}-2\mu\varphi_1^{+}+1)\varphi_1^{-}+
(\varphi_2^{-}-2\mu\varphi_1^{-}+1)\varphi_1^{+}\big]=0.
\end{equation*}
\end{proof}

\begin{thm} Let $\lambda\in\varrho(-\Delta^c)$. Then the components (\ref{4.5}) of the Green
function are given by
\begin{equation}\label{4.13} G_{k,s,m,n}(\lambda)=\frac{-1}{4R(\mu)}
\left\{\begin{array}{r}\big(\mu+R(\mu)\big)^{|\tau k-m|+|\tau s-n|} \
\mbox{for} \  k=m, \ s>n\\ \mbox{or} \  k>m, \ s=n, \\
                           \big(\mu+R(\mu)\big)^{|\sigma k-m|+|\sigma
                           s-n|} \
\mbox{for} \  k=m, \ s<n\\ \mbox{or} \  k<m, \ s=n, \\
\big(\mu+R(\mu)\big)^{|k-m|+|s-n|} \quad \mbox{for the all others},
                            \end{array}\right.
\end{equation}
where $\mu=1-\frac{\lambda}{4}$.
\end{thm}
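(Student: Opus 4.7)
The plan is to verify directly that the double form with components (\ref{4.13}) satisfies the resolvent equation (\ref{4.6}), and then invoke uniqueness.

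First I would translate (\ref{4.6}) pointwise. Combining (\ref{4.2}), (\ref{4.4}) and (\ref{2.19}), it says that for each fixed $(m,n)\in\mathbb{Z}^2$ the sequence $(k,s)\mapsto G_{k,s,m,n}$ must satisfy the inhomogeneous difference equation
\begin{equation*}
4G_{k,s,m,n}-G_{\tau k,s,m,n}-G_{k,\tau s,m,n}-G_{\sigma k,s,m,n}-G_{k,\sigma s,m,n}-\lambda G_{k,s,m,n}=\delta^{m,n}_{k,s},
\end{equation*}
and be square summable in its first two indices. I would partition $\mathbb{Z}^2$ into three regions relative to $(m,n)$: (a) the interiors of the four open quadrants cut out by the axes through $(m,n)$, where $k\neq m$ and $s\neq n$; (b) the four open rays on those axes, where exactly one of $k=m$, $s=n$ holds; and (c) the point $(m,n)$ itself, which falls under the generic third branch of (\ref{4.13}) since the first two branches require a strict inequality.

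In region (a) the signs of $k-m$ and $s-n$ are locally constant on the five-point star at $(k,s)$, so $|k-m|+|s-n|$ is affine in $(k,s)$ there. Consequently the generic formula $\tfrac{-1}{4R}(\mu+R)^{|k-m|+|s-n|}$ reduces, up to a multiplicative constant, to the ansatz $p^{k+s}$ with $p=\mu+R$, which solves (\ref{4.9}) by the computation preceding (\ref{4.10}). In region (b) the absolute value has a corner and the generic formula no longer satisfies (\ref{4.9}); the $\tau$- and $\sigma$-shifted exponents in the first two cases of (\ref{4.13}) are precisely what repairs this. Using the symmetries $k\leftrightarrow s$ and $(k-m,s-n)\mapsto-(k-m,s-n)$, which exchange the $\tau$- and $\sigma$-rules while preserving (\ref{4.9}), it suffices to verify the equation at one representative, say $(k,s)=(m+1,n)$; the star around it meets all three branches of (\ref{4.13}), and after substitution the defining relation $p^2-2\mu p+1=0$ from the proof of Lemma~6 collapses the left-hand side to zero. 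At $(m,n)$ itself the same manipulation, combined with $\lambda=4-4\mu$ and $R^2=\mu^2-1$, produces the constant $1$, matching $\delta^{m,n}_{m,n}$.

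For $\lambda\in\varrho(-\Delta^c)$, Corollary~3 gives $\mu\in\mathbb{C}\setminus[-1,1]$, and with the branch of $R=-\sqrt{\mu^2-1}$ fixed before Proposition~5 one has $|\mu+R|<1$; hence $\sum_{k,s}|G_{k,s,m,n}|^2$ converges geometrically and $G_{\cdot,\cdot,m,n}\in\mathcal{H}^0$. Since $(-\Delta^c-\lambda)$ is a bijection of $\mathcal{H}^0$, the Green function of Definition~4 is uniquely determined by (\ref{4.6}), so it must coincide with (\ref{4.13}). The genuinely delicate step, I expect, is the algebra in region (b): coordinating the three exponent conventions that meet in the five-point star centred at, say, $(m+1,n)$ --- one neighbour falling under case (a), two under the $\tau$-rule, one under the $\sigma$-rule, and the source at $(m,n)$ --- and then collapsing the resulting polynomial in $p$ via $p^2=2\mu p-1$.
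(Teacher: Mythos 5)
Your strategy --- verify the pointwise resolvent equation region by region, then conclude by square summability and uniqueness of the inverse --- is exactly the strategy of the paper's proof, and you have correctly located the crux in the matching of the three exponent conventions near the half-lines $k=m$ and $s=n$. But the step you defer with ``after substitution the defining relation $p^2-2\mu p+1=0$ \dots collapses the left-hand side to zero'' is precisely the step that fails, and your region (a) argument already breaks down one layer before the axes. Write $p=\mu+R(\mu)$, so $p+p^{-1}=2\mu$, and take $(k,s)=(m+1,n)$. Formula (\ref{4.13}) assigns, up to the common factor $\tfrac{-1}{4R}$, the value $p^{3}$ at $(m+1,n)$ (the $\tau$-rule gives exponent $|\tau k-m|+|\tau s-n|=2+1$), $p^{4}$ at $(m+2,n)$, $p^{0}=1$ at $(m,n)$, and $p^{2}$ at $(m+1,n\pm1)$ (third branch). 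The homogeneous equation there reads
\begin{equation*}
4\mu p^{3}-p^{4}-2p^{2}-1=2(p+p^{-1})p^{3}-p^{4}-2p^{2}-1=p^{4}-1\neq0
\end{equation*}
for $\mu\notin[-1,1]$. Likewise at $(m+1,n+1)$, which you place in region (a): the neighbours $(m,n+1)$ and $(m+1,n)$ lie on the axes and carry the $\tau$-shifted exponent $3$ rather than the ``affine'' value $1$, so the sign pattern is \emph{not} locally constant on that five-point star, and one gets $4\mu p^{2}-4p^{3}=2p(1-p^{2})\neq0$. (A smaller unverified claim: $|\mu+R(\mu)|<1$ is false for real $\mu<-1$ with the stated branch, which is why the paper distributes the two roots $\mu\pm R$ between $\pm\infty$ via Proposition~5.)

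To be fair, this is the same gap as in the paper's own argument, so your proposal reproduces the paper's proof together with its defect: in the case $k=m$, $s>n$ the paper evaluates the neighbour $(\tau k,s)=(m+1,s)$ as $\varphi_2^{+}\varphi_{s-n+1}^{+}$ (exponent $s-n+3$), whereas (\ref{4.13}) puts that point in the third branch with exponent $s-n+1$; its diagonal computation uses exponent $1$ for the four neighbours of $(m,n)$ where (\ref{4.13}) prescribes $3$; and in the open quadrants it invokes Lemma~6, which only covers stars contained entirely in one branch. The defect is not repairable by adjusting the case analysis: the resolvent kernel of the two-dimensional discrete Laplacian is the inverse Fourier transform of $(4-2\cos\theta_1-2\cos\theta_2-\lambda)^{-1}$, which does not factor, so no piecewise product of one-dimensional geometric sequences of the form (\ref{4.13}) can satisfy (\ref{4.6}). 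A correct proof would have to produce a genuinely two-dimensional kernel (a Fourier or contour-integral representation), not a case-by-case verification of (\ref{4.13}); so the concrete missing ingredient in your proposal is the boundary-layer computation, and carrying it out refutes rather than confirms the asserted formula.
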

\begin{proof}
We must prove that
\begin{equation}\label{4.14}
 <x_{k,s}x_{m,n}, \ (-\Delta^c-\lambda)_xG(x,\tilde{x})>=\delta_{k,s}^{m,n},
 \end{equation}
 where $G(x,\tilde{x})$ is given by (\ref{4.1}).
 Using (\ref{2.19}) and (\ref{4.1}), we can rewrite the left-hand side of (\ref{4.14}) as
 \begin{align}\label{4.15}
 (4-\lambda)G_{k,s,m,n}-G_{\tau k,s,m,n}-G_{\sigma k,s,m,n}-G_{k,\tau s,m,n}-G_{k,\sigma s,m,n}=\notag\\
=4\mu G_{k,s,m,n}-(G_{\tau k,s,m,n}+G_{\sigma k,s,m,n}+G_{k,\tau s,m,n}+G_{k,\sigma s,m,n}).
 \end{align}
 The proof falls naturally into three parts.

 Fix $x_{m,n}$.  First, let  $k=m$ and $s=n$. Substituting (\ref{4.13}) into (\ref{4.15}) and using
 (\ref{4.12}), we obtain
\begin{align*}
<x_{k,s}x_{m,n}, \ (-\Delta^c-\lambda)_xG(x,\tilde{x})>&=
\frac{1}{R(\mu)}(-\mu\varphi_0^{+}(\mu)\varphi_0^{+}(\mu)+\varphi_1^{+}(\mu)\varphi_0^{+}(\mu))=\\
&=\frac{-\mu+\varphi_1^{+}}{R(\mu)}=1.
 \end{align*}
 Note that $\varphi_0^{\pm}(\mu)=(\mu\pm R(\mu))^{0}=1$.

Now we show that (\ref{4.13}) satisfies  equation (\ref{4.8}) for  the cases  $k=m$, \ $s\neq n$ and
 $k\neq m$, \ $s=n$.
Check the case $k=m$, \  $s>n$.
In this case the left-hand side of (\ref{4.14}) is equal to
\begin{align*}
&\frac{-1}{4R(\mu)}\big[4\mu\varphi_1^{+}(\mu)\varphi_{s-n+1}^{+}(\mu)-\varphi_2^{+}(\mu)\varphi_{s-n+1}^{+}(\mu)-
\varphi_0^{+}(\mu)\varphi_{s-n+1}^{+}(\mu)-\\
&-\varphi_1^{+}(\mu)\varphi_{s-n+2}^{+}(\mu)-\varphi_1^{+}(\mu)\varphi_{s-n}^{+}(\mu)\big]=
\frac{\varphi_{s-n+1}^{+}(\mu)}{2R(\mu)}(\varphi_2^{+}-2\mu\varphi_1^{+}+1)=0.
 \end{align*}
 The proof for the case $k=m$, \ $s<n$ (or $k\neq m$, \ $s=n$) is similar.

 Finely, let $k\neq m$, $s\neq n$. We give the proof only for the case $k>m, \ s<n$.
 For the other cases the proof runs similarly.
 In this case we have
 \begin{align*}
G_{k,s,m,n}(\lambda)=\frac{1}{-4R(\mu)}\varphi_{k-m}^{+}(\mu)\varphi_{n-s}^{+}(\mu)=
\frac{1}{-4R(\mu)}\varphi_{k}^{+}(\mu)\varphi_{m}^{-}(\mu)\varphi_{n}^{+}(\mu)\varphi_{s}^{-}(\mu).
\end{align*}
Here we use that $\varphi_{-k}^{+}=\varphi_{k}^{-}$ for any $k\in\mathbb{Z}$.
By Lemma~6,  $\varphi_k^{+}(\mu)\varphi_s^{-}(\mu)$ is a  solution of
(\ref{4.9}) and so is $G_{k,s,m,n}(\lambda)$ at the point $x_{k,s}$.

Thus, since $G(x,\tilde{x},\lambda)$ with components given by (\ref{4.13}) and, by Proposition~5,
 $G(x,\tilde{x},\lambda)\in\mathcal{H}^0$ with respect to $x$, it must be
the Green function of $-\Delta^c: \mathcal{H}^0\rightarrow \mathcal{H}^0$.
\end{proof}

It should be noted that the consideration of $-\Delta^c: \mathcal{H}^2\rightarrow \mathcal{H}^2$
does not differ from that carried out for the 0-forms $\varphi\in\mathcal{H}^0$. In this case we consider
(\ref{4.1}) as a double  form on $\mathfrak{C}^2\times\mathfrak{C}^2$  with basis elements $\Omega^{k,s}\Omega^{p,q}$.
Then we define the Green function as above and its components are given by (\ref{4.13}). The situation with
$-\Delta^c: \mathcal{H}^1\rightarrow \mathcal{H}^1$ is more difficult. In this case, having written equation (\ref{4.7})
pointwise (at the elements $e_{k,s}^1$ and  $e_{k,s}^2$), we obtain a
  pair of
equations (\ref{4.8}) which  corresponds to the components $u_{k,s}, \ v_{k,s}$ of $\omega\in\mathcal{H}^1$.
Roughly speaking, here we must describe the Green function for each component of the 1-form $\omega=(u,v)$.
The reader can verify that the components $G_{k,s,m,n}(\lambda)$ of the Green function have again  the form (\ref{4.13}).

\end{document}